\newcommand{\MSOA}{\mbox{MSO}_1}
\newtheorem{theorem}{Theorem}[section]
\newtheorem{corollary}[theorem]{Corollary}
\newtheorem{lemma}[theorem]{Lemma}
\newtheorem{alg}[theorem]{Algorithm}
\newenvironment{proof}{\noindent{\bf Proof~}}{\null\hfill $\Box$\par\medskip}
\begin{document}

\title{A note on module-composed graphs}

\author{Frank Gurski\thanks{Heinrich-Heine Universit\"at D\"usseldorf,
Department of Computer Science, D-40225 D\"usseldorf, Germany,
E-Mail: gurski-corr@acs.uni-duesseldorf.de, }}

\bigskip

\maketitle

\begin{abstract}
In this paper we consider module-composed graphs, i.e. graphs which can be defined by a sequence of one-vertex insertions $v_1,\ldots,v_n$, such
that the neighbourhood of vertex $v_i$, $2\leq i\leq n$, forms a module (a homogeneous set) of the graph defined by vertices $v_1,\ldots,v_{i-1}$.

We show that
module-composed graphs are HHDS-free and thus homogeneously orderable, weakly chordal, and perfect. Every bipartite distance hereditary graph, every (co-$2C_4,P_4$)-free graph and thus every trivially perfect graph is module-composed. We give an $O(|V_G|\cdot (|V_G|+|E_G|))$ time algorithm to decide whether a given graph $G$ is module-composed and construct a corresponding 
module-sequence.

For the case of bipartite graphs, module-composed graphs
are exactly distance hereditary graphs, which implies
simple linear time algorithms for their recognition and construction of a corresponding module-sequence.

\bigskip
\noindent
{\bf Keywords:} graph algorithms, homogeneous sets, HHD-free graphs, distance hereditary graphs, bipartite graphs
\end{abstract}

\section{Preliminaries}

Let $G=(V_G,E_G)$ be a graph.
For some vertex $v\in V_G$ we denote the {\em neighbourhood} of $v$ by
$N(v)=\{w\in V_G~|~\{v,w\}\in E_G\}$. $M\subseteq V_G$ is called a {\em module (homogeneous set)} of $G$, if
and only if for all $(v_1,v_2)\in M^2$: $N(v_1)-M=N(v_2)-M$, i.e. $v_1$ and $v_2$ have
identical neighbourhoods outside $M$. $M\subseteq V_G$ is called a {\em trivial module},
if $|M|=0$,  $|M|=1$, or $M=V_G$, see \cite{CH94}. A graph $G$ is called {\em prime}
if every module of $G$ is trivial.
A module $M$ is {\em maximal} if there is no non-trivial module $N$ such that $M\subseteq N$.
A module is called {\em strong} if it does not overlap
with any other module. 

While the set of modules of a graph $G$ can be exponentially large, the set of strong modules is linear
in the number of vertices. The inclusion order of the set of all strong modules
defines a tree-structure which  is denoted as {\em modular decomposition} $T_G$, see \cite{MR84}.
The root of $T_G$ represents the graph $G$ and the leaves of $T_G$ correspond to
the vertices of $G$. Every inner node, i.e. non-leaf node, $w$ of $T_G$ corresponds to an induced subgraph
of $G$ consisting of the leaves of $T_G$ in subtree with root $w$, which is called the {\em representative graph} of $w$ 
and is denoted by $G(w)$. Vertex set
$V_{G(w)}$ is a strong module of $G$. For some inner node $v$ of $T_G$, the {\em quotient graph}  $G[v]$ is
obtained by substituting in $G(v)$ every strong module, represented by some child of $v$ in $T_G$, by a
single vertex. For some inner node $v$ of  $T_G$, quotient graph $G[v]$ is either an independent set
($v$ is denoted as {\em co-join node}), a clique ($v$ is denoted as {\em join node}), or a prime graph ($v$ is denoted as
{\em prime node}).

For $U\subseteq V_G$, we define by
$G[U]$ the subgraph of $G$ induced by the vertices of $U$.
For some graph $G$, we denote its edge complement by co-$G$. For a set of graphs ${\mathcal F}$, we denote by {\em ${\mathcal F}$-free graphs} the set of all graphs that do not contain a graph of ${\mathcal F}$
as an induced subgraph.

In Table \ref{gr} we show some special graphs to which we refer during the paper.
A {\em hole} is a chordless cycle with at least five vertices. 
A {\em $k$-sun} is a chordal graph $G$ on $2k$ vertices for some $k\ge 3$ whose vertex
set can be partitioned into  $V_G=U\cup W$ such that $U=\{u_0,\ldots,u_{k-1}\}$ and $W=\{w_0,\ldots,w_{k-1}\}$ 
is an independent set. Additionally vertex $u_i$
is adjacent to vertex $w_j$ if and only if $i=j$ or $i=j+1 \mod k$.
$G$ is called a $sun$ if it is a $k$-sun for some $k\ge 3$.
If graph  $G[U]$ is a clique, then $G$ is called a {\em complete $k$-sun}.

\begin{table}[ht]
\begin{center}
\begin{tabular}{ccccccc}  
\\
\epsfig{figure=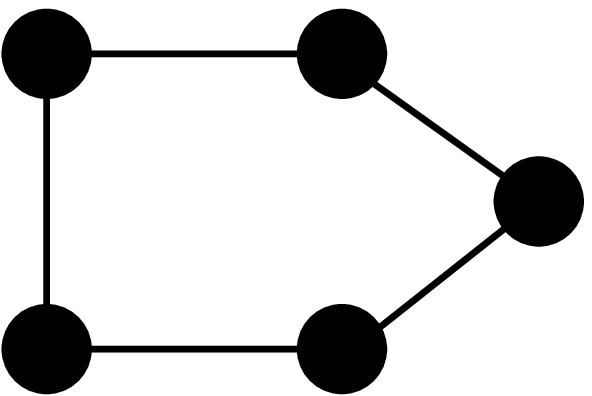,width=1.5cm}  && \epsfig{figure=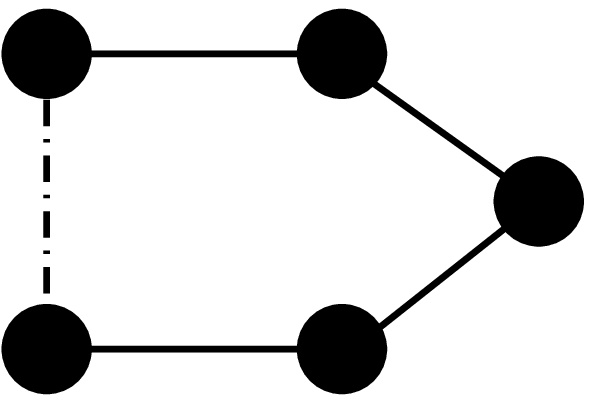,width=1.5cm} && \epsfig{figure=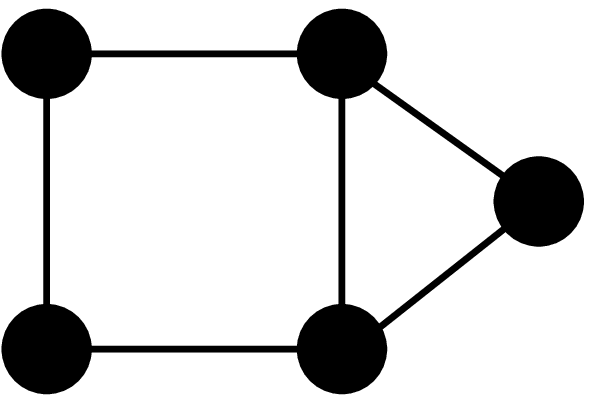,width=1.5cm} &&   \epsfig{figure=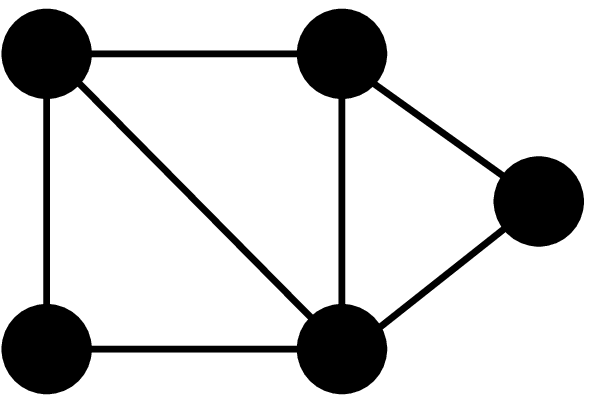,width=1.5cm}        \\
$C_5$ &&  hole && house && gem  \\ \\
\epsfig{figure=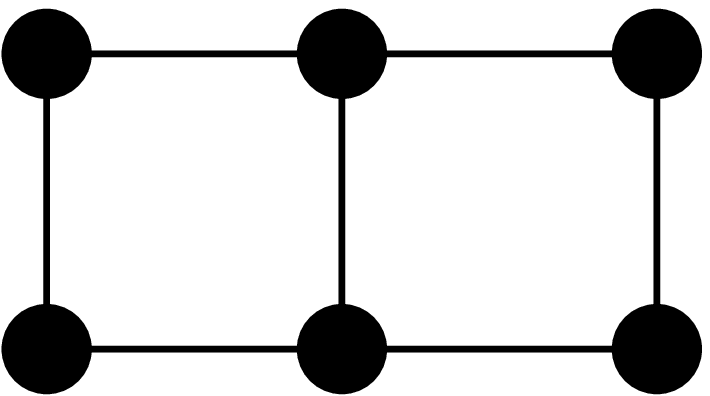,width=1.8cm} && \epsfig{figure=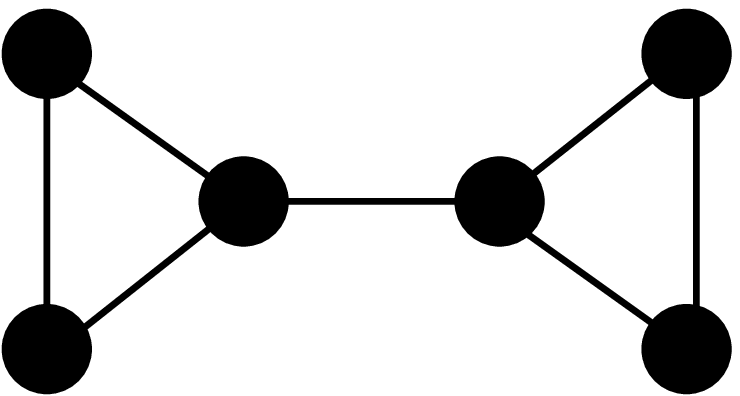,width=1.8cm}  && \epsfig{figure=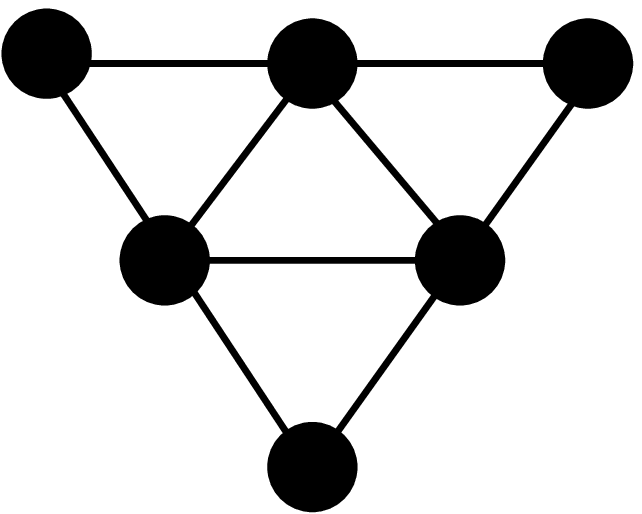,width=1.6cm} && \epsfig{figure=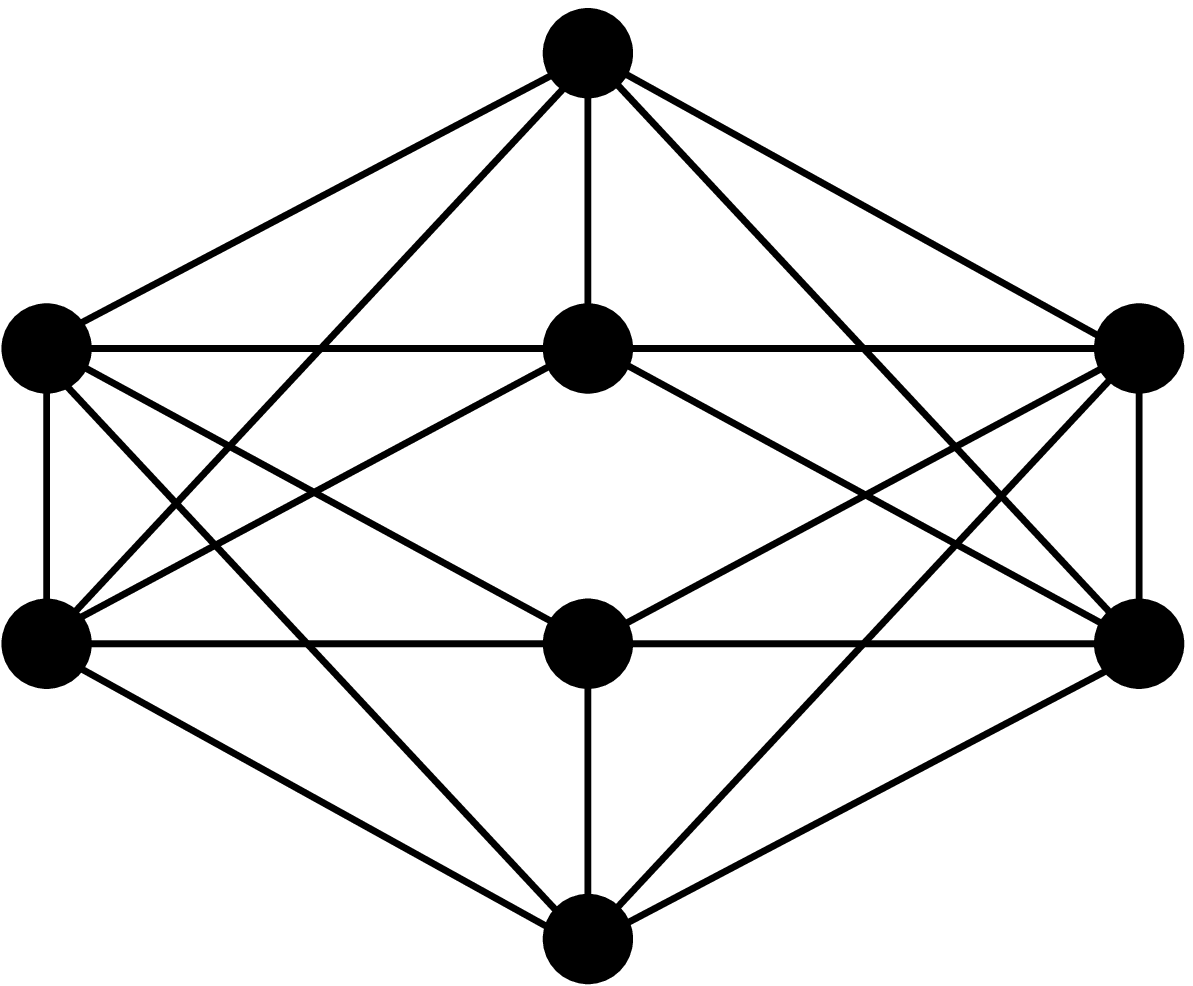,width=2.8cm}\\
domino &&  co-($K_{3,3}-e$)       &&  3-sun  && co-$2C_4$ 
\end{tabular}
\end{center}
\caption
{Special graphs}
\label{gr}
\end{table}

\section{Module-composed graphs}

There are several graph classes which are defined by a sequence of one-vertex
extentions of restricted form. Some well known examples are trees, co-graphs,
and distance hereditary graphs, see \cite{Rao07} for a survey. We next analyze
a closely related but new concept.

Graph $G$ is {\em module-composed}, if and only if there exists a linear 
ordering $\varphi:V_G \to [|V_G|]$, such that for every $2\leq i\leq|V_G|$ the 
neighbourhood of vertex $\varphi^{-1}(i)$ in graph 
$G[\{\varphi^{-1}(1),\ldots,\varphi^{-1}(i-1)\}]$ forms a module. 
For some module-composed graph $G$, $\varphi$ is called a {\em module-sequence} for $G$.

The definition of module-composed graphs
was introduced \cite{AGKKW06} for computing
connectivity ratings for vertices in special graph classes, 
see also \cite{AKKW06}.
We first recall the following easy but important lemma from \cite{AGKKW06}.

\begin{lemma}[Induced subgraph]\label{Li}
If a graph $G$ is module-composed, then every induced subgraph of $G$ is also 
module-composed.
\end{lemma}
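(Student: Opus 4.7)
The plan is to show that the restriction of a module-sequence for $G$ to the vertex set of any induced subgraph, preserving the relative order, is itself a module-sequence. Concretely, given a module-sequence $\varphi:V_G\to[|V_G|]$ and a subset $U\subseteq V_G$, let $\psi:U\to[|U|]$ be the unique order-preserving bijection, so that $\psi(u)<\psi(v)$ iff $\varphi(u)<\varphi(v)$ for $u,v\in U$. I claim $\psi$ is a module-sequence for $G[U]$.

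The whole argument rests on one small auxiliary fact: if $M$ is a module of a graph $H$ and $S\subseteq V_H$, then $M\cap S$ is a module of $H[S]$. I would prove this in one line from the definition: for $v_1,v_2\in M\cap S$,
\[
N_{H[S]}(v_i)\setminus(M\cap S)=\bigl(N_H(v_i)\cap S\bigr)\setminus M=\bigl(N_H(v_i)\setminus M\bigr)\cap S,
\]
and the right-hand side is independent of $i\in\{1,2\}$ because $M$ is a module of $H$.

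Given this, the main claim is immediate. Fix $u\in U$ with $\psi(u)\geq 2$ and write $i=\varphi(u)$. Let $V_{<u}=\{\varphi^{-1}(1),\ldots,\varphi^{-1}(i-1)\}$ and $U_{<u}=\{\psi^{-1}(1),\ldots,\psi^{-1}(\psi(u)-1)\}$. Since $\psi$ is the order-induced restriction of $\varphi$, we have $U_{<u}=U\cap V_{<u}$. By assumption, $N_G(u)\cap V_{<u}$ is a module of $G[V_{<u}]$. Applying the auxiliary fact with $H=G[V_{<u}]$ and $S=U_{<u}$ yields that $(N_G(u)\cap V_{<u})\cap U_{<u}=N_G(u)\cap U_{<u}$ is a module of $G[U_{<u}]$. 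But $N_G(u)\cap U_{<u}$ is exactly the neighbourhood of $u$ in $G[U_{<u}]$, so the module condition required by $\psi$ is verified at $u$. Iterating over $u\in U$ completes the proof.

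There is really no hard step here; the only thing one must be careful about is that the relative order on $U$ is used, so that $U_{<u}$ coincides with $U\cap V_{<u}$, which is what lets the auxiliary fact transfer the module property from $G[V_{<u}]$ down to $G[U_{<u}]$.
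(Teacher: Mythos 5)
Your proof is correct: the auxiliary fact that a module restricted to an induced subgraph remains a module is verified properly, and the observation that $U_{<u}=U\cap V_{<u}$ under the order-preserving relabelling is exactly the point that makes the argument go through. The paper itself states this lemma without proof (citing it from an earlier reference), and your argument is the natural one that such a reference would contain, so there is nothing further to compare.
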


Given two module-sequences $\varphi_1,\varphi_2$ for two graphs $G_1$ and $G_2$, 
sequence $\varphi(v)=\varphi_1(v), v\in V_{G_1}$ and $\varphi(v)=\varphi_2(v)+|V_{G_1}|, v\in V_{G_2}$ is a possible module-sequence for the disjoint union of these two graphs.

\begin{lemma}[Disjoint union]\label{Ld}
For two module-composed graphs $G_1,G_2$, the disjoint union  $G_1\cup G_2$ is also 
module-composed.
\end{lemma}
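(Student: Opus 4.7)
The paragraph immediately preceding the lemma statement already supplies the key construction, so my plan is to verify that this explicit ordering works. Given module-sequences $\varphi_1$ for $G_1$ and $\varphi_2$ for $G_2$, I would define the combined ordering $\varphi$ on $V_{G_1\cup G_2}$ by $\varphi(v)=\varphi_1(v)$ for $v\in V_{G_1}$ and $\varphi(v)=\varphi_2(v)+|V_{G_1}|$ for $v\in V_{G_2}$, so that all of $G_1$ is inserted before any vertex of $G_2$.

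The proof then splits naturally into two cases according to whether the newly inserted vertex $u=\varphi^{-1}(i)$ belongs to $V_{G_1}$ or to $V_{G_2}$. If $u\in V_{G_1}$, then $\{\varphi^{-1}(1),\ldots,\varphi^{-1}(i-1)\}\subseteq V_{G_1}$ and the induced subgraph is precisely the one appearing in the module-sequence $\varphi_1$; since $u$ has no edges to any vertex of $V_{G_2}$, its neighbourhood in the initial segment coincides with its neighbourhood in $G_1[\{\varphi_1^{-1}(1),\ldots,\varphi_1^{-1}(i-1)\}]$, which is a module by assumption.

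If $u\in V_{G_2}$, write $j=\varphi_2(u)$. The initial segment in $G_1\cup G_2$ is the disjoint union of all of $G_1$ with $G_2[\{\varphi_2^{-1}(1),\ldots,\varphi_2^{-1}(j-1)\}]$. Because there are no edges between $V_{G_1}$ and $V_{G_2}$, the neighbourhood $N(u)$ in this initial segment is entirely contained in $V_{G_2}$ and equals $N_{\varphi_2}(u)$, the neighbourhood prescribed by $\varphi_2$, which by hypothesis is a module of $G_2[\{\varphi_2^{-1}(1),\ldots,\varphi_2^{-1}(j-1)\}]$.

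The only step that requires a brief sanity check --- and the mildest obstacle --- is to observe that adding a disjoint copy of $G_1$ preserves the module property of $N_{\varphi_2}(u)$. This is immediate from the definition: for any two vertices $v_1,v_2\in N_{\varphi_2}(u)$ and any vertex $w$ outside this set, either $w\in V_{G_1}$ (in which case $w$ is non-adjacent to both $v_1$ and $v_2$) or $w\in V_{G_2}$ (in which case $w$'s adjacency pattern to $v_1,v_2$ is inherited from $G_2$ and is identical by the module property there). Hence $\varphi$ is a valid module-sequence and $G_1\cup G_2$ is module-composed.
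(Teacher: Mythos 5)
Your proposal is correct and uses exactly the construction the paper gives in the paragraph preceding the lemma (append the $\varphi_2$-ordering of $G_2$ after all of $G_1$); the paper leaves the verification implicit, and your case analysis fills it in correctly. No issues.
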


The following observation follows from Lemma \ref{Li} and the definition of 
module-composed graphs.

\begin{lemma} \label{Lx}
A graph $G$ is  module-composed, if and only if there exists a
vertex $v\in V_G$ such that $N(v)$ is a module in graph $G[V_G-\{v\}]$ and graph $G[V_G-\{v\}]$ 
is module-composed.
\end{lemma}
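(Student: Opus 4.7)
The plan is to prove both implications directly from the definition of a module-sequence, using Lemma \ref{Li} for the forward direction.

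For the forward direction, I would assume $G$ is module-composed with module-sequence $\varphi: V_G \to [|V_G|]$, and pick the vertex inserted last, namely $v := \varphi^{-1}(|V_G|)$. The defining property of $\varphi$ applied at index $i = |V_G|$ states exactly that $N(v)$ is a module in $G[\{\varphi^{-1}(1),\ldots,\varphi^{-1}(|V_G|-1)\}] = G[V_G \setminus \{v\}]$. Moreover, since $G[V_G \setminus \{v\}]$ is an induced subgraph of $G$, Lemma \ref{Li} immediately yields that it is module-composed, establishing both required properties for this vertex $v$.

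For the backward direction, I would take a vertex $v$ such that $N(v)$ is a module in $H := G[V_G \setminus \{v\}]$ and $H$ is module-composed, say with module-sequence $\psi: V_H \to [|V_G|-1]$. I then extend $\psi$ to a linear ordering $\varphi$ of $V_G$ by setting $\varphi(u) := \psi(u)$ for every $u \in V_H$ and $\varphi(v) := |V_G|$. For each index $2 \leq i \leq |V_G|-1$, the module condition on $\varphi^{-1}(i)$ in $G[\{\varphi^{-1}(1),\ldots,\varphi^{-1}(i-1)\}]$ is inherited verbatim from $\psi$ being a module-sequence of $H$, because the induced subgraphs on those prefixes coincide in $G$ and in $H$. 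For $i = |V_G|$ the condition becomes exactly the hypothesis that $N(v)$ is a module in $G[V_G \setminus \{v\}]$. Hence $\varphi$ is a module-sequence of $G$.

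There is no real obstacle in this argument; the lemma is essentially a reformulation of the recursive flavour of the definition. The only technical point to handle carefully is the extension of $\psi$ to $\varphi$, where one must observe that the prefix-induced subgraphs in $G$ under $\varphi$ agree with those in $H$ under $\psi$ for every $i \leq |V_G|-1$, so that the module property transfers without change.
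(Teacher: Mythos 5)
Your proof is correct and matches the paper's intent exactly: the paper states this lemma as an observation following from Lemma \ref{Li} and the definition, and your argument (taking the last vertex of a module-sequence for the forward direction, appending $v$ to a module-sequence of $G[V_G-\{v\}]$ for the converse) is precisely that observation spelled out. No gaps.
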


By Lemma \ref{Lx} the following graphs (see Table \ref{gr}) are  not module-compo\-sed,
since none of them contains a vertex $v$ such that $N(v)$ is a module in graph $G[V_G-\{v\}]$:

$C_n$, $n\ge 5$ (i.e. holes), co-$C_n$, $n\ge 5$ (i.e. anti-holes), house,  domino,  
co-($K_{3,3}-e$), 3-sun, co-$2C_4$.

The example of graph co-$2C_4$ shows that not every co-graph\footnote{A {\em co-graph} is either a  single vertex $\bullet$,
the disjoint union $G_1 \cup G_2$ of two co-graphs $G_1,G_2$, or the join  $G_1 \times G_2$ of two co-graphs $G_1,G_2$, which connects every vertex of $G_1$ with every vertex of $G_2$. } is module-compo\-sed. Graph co-$2C_4$ can even be used to characterize those co-graphs which are module-composed.

\begin{lemma}\label{Lco}
Let $G$ be a co-graph. The following conditions are equivalent.
\begin{enumerate}
\item $G$ is module-composed.

\item $G$ is (co-$2C_4$)-free.
\end{enumerate}
\end{lemma}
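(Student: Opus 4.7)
The forward implication $(1)\Rightarrow(2)$ is immediate: by Lemma~\ref{Li} module-composedness is hereditary, and co-$2C_4$ was observed just before Lemma~\ref{Lco} to fail being module-composed, so no module-composed graph can contain it.

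For $(2)\Rightarrow(1)$, I would proceed by induction on $|V_G|$, with $|V_G|=1$ trivial. If $G$ is disconnected, every connected component is a smaller (co-$2C_4$)-free co-graph and hence module-composed by induction, so iterating Lemma~\ref{Ld} yields a module-sequence for $G$. Otherwise $G$ is a connected co-graph with $|V_G|\ge 2$, so $\overline{G}$ is disconnected and one can write $G=A\times B$ for non-empty co-graphs $A,B$. Because $G$ is (co-$2C_4$)-free and co-$2C_4=2K_2\times 2K_2$, at most one of $A,B$ contains an induced $2K_2$; I may assume w.l.o.g.\ that $B$ is $2K_2$-free. Then $B$ is a $\{P_4,2K_2\}$-free co-graph, so $\overline{B}$ is $\{P_4,C_4\}$-free, i.e., a trivially perfect graph.

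I would construct a module-sequence for $G$ by first running a module-sequence of $A$ (available by induction) and then appending the vertices of $B$ in a carefully chosen order. Writing $S\subseteq V_B$ for the $B$-vertices already inserted, the current subgraph equals $A\times B[S]$, and the neighbourhood of a candidate $v\in V_B\setminus S$ is $V_A\cup N_{B[S]}(v)$. A short calculation of outside-neighbourhoods shows this set is a module of $A\times B[S]$ exactly when $N_{B[S]}(v)$ is complete to $S\setminus N_{B[S]}(v)$ within $B[S]$; passing to complements, this says precisely that the neighbours of $v$ in $\overline{B}[S\cup\{v\}]$ form a union of connected components of $\overline{B}[S]$.

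It therefore suffices to order the vertices of $\overline{B}$ so that every newly inserted vertex has this union-of-components property. Such an ordering is built by induction on $|V_{\overline{B}}|$ from the recursive structure of trivially perfect graphs: if $\overline{B}$ has a universal vertex $u$, inductively order $\overline{B}-u$ and append $u$ last, whose neighbourhood is all of $V_{\overline{B}}-\{u\}$ and is thus trivially a union of components; if $\overline{B}$ is disconnected, concatenate orderings of its components, since a newcomer from component $C$ sees neighbours only inside the already-placed portion of $C$, and those form a union of components of that portion (and so of the whole current subgraph) by the inner induction. The main conceptual obstacle is reformulating the join-case module condition as a structural property of $\overline{B}$; once the union-of-components characterisation is available, the trivially perfect decomposition supplies the ordering almost mechanically, and the two ingredients combine into a module-sequence for $G$.
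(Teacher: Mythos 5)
Your proof is correct, and while it shares the paper's overall skeleton --- Lemma~\ref{Li} for $(1)\Rightarrow(2)$, and an induction over the co-graph decomposition with Lemma~\ref{Ld} handling disjoint unions for $(2)\Rightarrow(1)$ --- your treatment of the join case is genuinely different. The paper argues that in any join $X_1\times X_2$ of a (co-$2C_4$)-free co-graph one of the two factors is an induced subgraph of $K_1\cup K_2$, and then appends those at most three vertices (the $K_2$ first, the $K_1$ last) after a module-sequence for the other factor. You only extract the weaker --- and, unlike the paper's claim, generally valid --- consequence that one factor $B$ is $2K_2$-free, so that $\overline{B}$ is trivially perfect, and you compensate by proving more: a clean criterion for when $V_A\cup N_{B[S]}(v)$ is a module of $A\times B[S]$ (namely that the $\overline{B}$-neighbourhood of $v$ in $S$ is a union of connected components of $\overline{B}[S]$), together with an ordering of $\overline{B}$ realizing it, built from the universal-vertex characterization of trivially perfect graphs. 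Your version handles joins such as $2K_2\times K_3$ or $K_4\times K_4$, where neither factor embeds into $K_1\cup K_2$ and the paper's appended-triple recipe does not literally apply, so your route is not merely different but actually repairs a gap in the published argument; the price is the extra inner induction on $\overline{B}$, and the one external fact you should cite explicitly is that every connected trivially perfect graph has a universal vertex (Wolk's theorem), which is what makes your two-case analysis of $\overline{B}$ exhaustive.
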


\begin{proof}
If $G$ is module-composed, then by Lemma \ref{Li} it obviously contains no co-$2C_4$ as induced subgraph.

Let $G$ be (co-$2C_4$)-free co-graph. Then there exists a co-graph expression $X$ defined by the three co-graph operations (single vertex $\bullet$,
disjoint union $G_1 \cup G_2$ of two co-graphs $G_1,G_2$, join  $G_1 \times G_2$ of two co-graphs $G_1,G_2$) for $G$.  Any subexpression $\bullet$ and  $G_1 \cup G_2$
are also feasible for a module-sequence. 

Let $X'=X_1 \times X_2$
be a subexpression of $X$. Since the graph defined by $X'$ contains no co-$2C_4$ as an induced subgraph either graph defined by $X_1$ or that by $X_2$ defines a subgraph of $K_1\cup K_2$, i.e. the disjoint union of a clique 
on two vertices and a clique on one vertex. Let us assume that  $X_2$ does so.  This allows us to define
a module decomposition for $X$ as follows. We start with
a module-sequence for $X_1$, which exists by induction, proceed with
the vertices of $K_2$ and finish with vertex of graph $K_1$,
which leads a module-sequence for graph defined by $X$.
\end{proof}

Co-graphs are exactly $P_4$-free graphs which implies our next corollary.

\begin{corollary}
(co-$2C_4,P_4$)-free graphs are module-composed.
\end{corollary}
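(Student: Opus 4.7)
The plan is to observe that this corollary is essentially a direct specialization of Lemma~\ref{Lco} combined with the characterization of co-graphs as exactly the $P_4$-free graphs, which was just recalled in the sentence preceding the statement. So the proof is a short chain of implications rather than anything requiring a new construction.

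First I would let $G$ be an arbitrary (co-$2C_4,P_4$)-free graph and note that, since $G$ is $P_4$-free, $G$ is a co-graph. Second, since $G$ is also (co-$2C_4$)-free by assumption, $G$ satisfies both conditions required on the right-hand side of the equivalence in Lemma~\ref{Lco}. Applying Lemma~\ref{Lco} then immediately gives that $G$ is module-composed.

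There is no substantive obstacle: the real content has already been absorbed by Lemma~\ref{Lco}, whose nontrivial direction handled the inductive construction of a module-sequence from a co-graph expression by exploiting the absence of co-$2C_4$ at every join subexpression. The corollary only spells out the consequence of reading the $P_4$-free assumption as ``is a co-graph'' so that Lemma~\ref{Lco} becomes applicable.
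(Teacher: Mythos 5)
Your argument is exactly the paper's: the corollary is stated immediately after the remark that co-graphs are precisely the $P_4$-free graphs, so a (co-$2C_4,P_4$)-free graph is a (co-$2C_4$)-free co-graph and Lemma~\ref{Lco} applies directly. Correct, and the same route as the paper.
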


Further it is  known that trivially perfect\footnote{
A graph is {\em trivially perfect} if for every induced subgraph $H$ of $G$,
the size of the largest independent set in $H$ equals the number of  all maximal cliques in $H$.} graphs are exactly  $(C_4,P_4)$-free graphs
\cite{Gol78}, which obviously form a subclass of $($co-$2C_4,P_4)$-free graphs.

\begin{corollary}
Trivially perfect graphs are module-composed.
\end{corollary}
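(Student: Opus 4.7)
The plan is to invoke the immediately preceding corollary, which says that (co-$2C_4,P_4$)-free graphs are module-composed. Since trivially perfect graphs coincide with $(C_4,P_4)$-free graphs, all that remains is to verify the set-theoretic containment
\[
 \{(C_4,P_4)\text{-free graphs}\} \;\subseteq\; \{(\text{co-}2C_4,P_4)\text{-free graphs}\}.
\]
A class defined by forbidden induced subgraphs becomes smaller when more forbidden subgraphs are added, so this containment is equivalent to the statement that co-$2C_4$ itself fails to be $(C_4,P_4)$-free, i.e.\ that co-$2C_4$ contains an induced $C_4$ or an induced $P_4$.

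To check this, I would exhibit an induced $C_4$ in co-$2C_4$ explicitly. Label the two disjoint 4-cycles in $2C_4$ as $a_1a_2a_3a_4$ and $b_1b_2b_3b_4$. In the complement, every pair $(a_i,b_j)$ is an edge (they lie in different components of $2C_4$, hence are non-adjacent there), while $a_1a_2$ and $b_1b_2$ are non-edges (they are edges in $2C_4$). Therefore $a_1,b_1,a_2,b_2$ induce a $C_4$ in co-$2C_4$. Consequently any induced co-$2C_4$ in a graph $G$ would yield an induced $C_4$ in $G$, so $(C_4,P_4)$-freeness implies $(\text{co-}2C_4,P_4)$-freeness.

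Combining the two steps, every trivially perfect graph is $(\text{co-}2C_4,P_4)$-free, and the preceding corollary finishes the argument. The only delicate point is the direction of the containment between forbidden-subgraph classes; once one has the small explicit calculation that $C_4$ embeds into co-$2C_4$, there is no real obstacle.
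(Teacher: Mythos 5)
Your argument is correct and follows essentially the same route as the paper: combine the preceding corollary on (co-$2C_4,P_4$)-free graphs with the characterization of trivially perfect graphs as $(C_4,P_4)$-free, noting that the latter class is contained in the former. The paper simply asserts this containment as obvious, whereas you verify it by exhibiting the induced $C_4$ on $a_1,b_1,a_2,b_2$ inside co-$2C_4$; that computation is correct.
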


%
%

Next we conclude results on super classes
of module-composed graphs.

It is easy to see that the house, every hole and the domino are not module-composed.
By a result shown in \cite{Far83} each sun contains a complete sun as induced subgraph, which
is obviously not module-composed. By  Lemma \ref{Li} the next result follows.

\begin{lemma}
Module-composed graphs are HHDS-free\footnote{(house,hole,domino,sun)-free}.
\end{lemma}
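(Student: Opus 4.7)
The plan is to combine Lemma~\ref{Li} with Lemma~\ref{Lx}. Since module-composed graphs form an induced-hereditary class, it suffices to exhibit, for each forbidden graph $H\in\{\text{house, hole, domino, sun}\}$, a proof that $H$ itself is not module-composed; by Lemma~\ref{Lx}, this further reduces to checking that no vertex $v\in V_H$ has neighbourhood $N(v)$ forming a module of $H-v$.

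For the house, each hole $C_n$ with $n\geq 5$, and the domino, the check is an immediate vertex-by-vertex case analysis and has essentially already been asserted in the paragraph right after Lemma~\ref{Lx}. In each case, picking any candidate $v$, the two neighbours of $v$ in $H-v$ extend along the cycle or chord structure into distinct neighbour sets outside $N(v)$. Since there are very few automorphism classes of vertices in each of these graphs, only a handful of checks are needed.

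The real work lies in the sun case. Here I would first invoke the cited theorem of Farber~\cite{Far83}: every $k$-sun contains an induced complete $k$-sun, so by Lemma~\ref{Li} it is enough to rule out complete $k$-suns. With the standard labelling $U=\{u_0,\ldots,u_{k-1}\}$ (clique), $W=\{w_0,\ldots,w_{k-1}\}$ (independent), and $u_i\sim w_j$ iff $i=j$ or $i\equiv j+1\pmod{k}$, only two vertex orbits need be considered. For $v=w_j$, the set $N(w_j)=\{u_j,u_{j+1}\}$ fails to be a module of $H-w_j$ because, using $k\geq 3$, only $u_j$ is adjacent to $w_{j-1}$. For $v=u_i$, the set $N(u_i)$ contains both a clique vertex $u_{i+1}$, which has the external neighbour $w_{i+1}\in W\setminus N(u_i)$, and the independent-set vertex $w_i$, which has no neighbour in $W$ at all, so the module condition fails.

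The main obstacle is almost purely notational: the indices in the sun live modulo $k$, and one must verify carefully that the distinguishing external neighbours do lie outside the purported module and that nothing degenerates at the boundary $k=3$. Beyond Farber's induced-subgraph theorem and the two lemmas already cited, no new ideas are required.
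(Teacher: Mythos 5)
Your proposal is correct and follows essentially the same route as the paper: appeal to Lemma~\ref{Li} for heredity, handle house/hole/domino by the direct non-module-composedness check already noted after Lemma~\ref{Lx}, and reduce the sun case to complete suns via Farber's result before verifying via Lemma~\ref{Lx} that no vertex of a complete $k$-sun has a modular neighbourhood. The only difference is that you spell out the vertex-orbit case analysis for the complete sun, which the paper leaves as ``obviously not module-composed''; your index computations (using $w_{j-1}$ to split $N(w_j)$ and $w_{i+1}$ to split $N(u_i)$) are valid for all $k\ge 3$.
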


Since HHDS-free graphs are perfect\footnote{A graph $G$ is {\em perfect} if, for every induced subgraph $H$ of $G$, the chromatic number of $H$ is equal to the size of a maximum clique of $H$.}, the same holds true for module-composed graphs.

\begin{corollary}
Module-composed graphs are perfect.
\end{corollary}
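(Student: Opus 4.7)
The plan is essentially to chain two facts and do no real work of my own. First I would invoke the preceding lemma, which established that every module-composed graph is HHDS-free. Second I would appeal to the known result from the literature that HHDS-free graphs are perfect, giving the conclusion by transitivity. The structure of the proof is therefore: let $G$ be module-composed; by the preceding lemma $G$ is (house, hole, domino, sun)-free; since the class of HHDS-free graphs is perfect, $G$ is perfect.

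The only substantive step is justifying that HHDS-free graphs are perfect, which is exactly the sentence the author states in running text just before the corollary, so a citation suffices. There are at least two natural routes one could use to support the citation. One route goes through \emph{homogeneously orderable} graphs (as advertised in the abstract): HHDS-free graphs are known to be homogeneously orderable, and homogeneously orderable graphs are perfect. An alternative route is through \emph{perfectly orderable} graphs: HHD-free graphs are perfectly orderable (Ho\`ang--Khouzam), and the additional sun-free condition is unnecessary for perfectness but is inherited by module-composed graphs anyway. Either way, the conclusion for module-composed graphs is immediate.

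I do not expect any real obstacle. The main care required is bibliographic rather than mathematical: one must cite the correct paper for the perfectness of HHDS-free graphs (or its superclass of homogeneously orderable / perfectly orderable graphs). No induction, no case analysis on the module-sequence, and no structural argument on $G$ itself is needed, because all the work has been pushed into the preceding lemma.
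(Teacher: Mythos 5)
Your proposal matches the paper exactly: the paper also derives the corollary by combining the preceding lemma (module-composed graphs are HHDS-free) with the cited fact that HHDS-free graphs are perfect. No further argument is given or needed.
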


Further, HHDS-free graphs are homogeneously orderable  by the results shown in \cite{BDN97}, which implies the same for module-composed graphs.

\begin{corollary}
Module-composed graphs are homogeneously orderable.
\end{corollary}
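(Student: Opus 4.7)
The plan is essentially to chain two results that are already in hand, so there is no real obstacle: the work has been done in the preceding lemma and in the cited paper \cite{BDN97}. First I would recall the immediately preceding lemma, which establishes that every module-composed graph is HHDS-free, i.e.\ contains no induced house, hole, domino, or sun. This was itself obtained by observing that none of these forbidden subgraphs contains a vertex whose neighbourhood is a module of the rest, combined with Lemma~\ref{Li} (which lets us close the class of module-composed graphs under induced subgraphs, so that a module-composed graph cannot contain any non-module-composed induced subgraph).

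Second, I would invoke the result of Brandstädt, Dragan and Nicolai from \cite{BDN97}, which states that the class of HHDS-free graphs is contained in the class of homogeneously orderable graphs. Composing the two inclusions yields
\[
\text{module-composed} \;\subseteq\; \text{HHDS-free} \;\subseteq\; \text{homogeneously orderable},
\]
which is exactly the statement of the corollary.

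Since the corollary is a direct transitivity argument between a lemma proved in the paper and a theorem cited from the literature, there is no genuine difficulty. The only thing worth double-checking is that the notion of HHDS-free used here (house, hole, domino, sun, where ``sun'' means $k$-sun for every $k\geq 3$) matches the hypothesis under which \cite{BDN97} establishes homogeneous orderability; this is the case, so the proof reduces to a single sentence citing the previous lemma and \cite{BDN97}.
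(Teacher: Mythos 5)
Your proposal matches the paper's argument exactly: the corollary follows by combining the preceding lemma (module-composed graphs are HHDS-free) with the result of \cite{BDN97} that HHDS-free graphs are homogeneously orderable. Nothing is missing, and your remark about checking that the notion of sun (every $k$-sun, $k\ge 3$) agrees with the hypothesis of \cite{BDN97} is a sensible sanity check.
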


Since the graph $C_4$ is module-composed but not chordal, we conclude that 
module-composed graphs are not chordal, but they are weakly chordal\footnote{A graph is {\em weakly chordal} if it does not 
contain any induced cycles of length greater than four or their complements.}, since they
are HHD-free\footnote{(house,hole,domino)-free} and HHD-free graphs are weakly chordal.

\begin{corollary}
Module-composed graphs are weakly chordal.
\end{corollary}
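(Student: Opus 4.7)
The plan is essentially to chain two facts already available in the text. First, recall that the previous lemma established module-composed graphs to be HHDS-free, i.e. they contain no induced house, hole, domino, or sun. In particular, dropping the sun constraint, they are HHD-free as well; this is the one-line reduction that carries the bulk of the work.

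Next, I would invoke the known result (as flagged in the footnote) that every HHD-free graph is weakly chordal, and conclude by definition chasing: module-composed graphs contain no induced $C_n$ for $n \ge 5$ (since these are holes, forbidden by the H in HHD), and no induced co-$C_n$ for $n \ge 5$ either. For the anti-hole part, I could alternatively provide a short direct argument using Lemma \ref{Li}, citing the earlier remark that co-$C_n$, $n\ge 5$, has already been listed as not module-composed because no vertex has a neighbourhood forming a module in the graph minus that vertex; by Lemma \ref{Li} this rules out anti-holes as induced subgraphs.

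There is no real obstacle here: the implications \emph{module-composed} $\Rightarrow$ \emph{HHDS-free} $\Rightarrow$ \emph{HHD-free} $\Rightarrow$ \emph{weakly chordal} are each either already proved or cited from the literature, so the proof will amount to one or two sentences making this chain explicit. The only question of taste is whether to cite the HHD-free-implies-weakly-chordal result directly, or to double-check the anti-hole case by hand via Lemma \ref{Li}; I would do the former for brevity, since the footnote already pins down the definition of weakly chordal and the implication from HHD-freeness is standard.
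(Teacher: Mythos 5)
Your proposal matches the paper's argument exactly: the paper also derives the corollary from the chain module-composed $\Rightarrow$ HHDS-free $\Rightarrow$ HHD-free $\Rightarrow$ weakly chordal, citing the standard fact that HHD-free graphs are weakly chordal. The optional direct check of the anti-hole case via Lemma \ref{Li} is a harmless extra, but the paper, like you, simply invokes the known implication.
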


\section{Algorithms for module-composed graphs}

Next we give a polynomial time algorithm to recognize module-composed graphs.
Our algorithm is based on Lemma \ref{Lx}. In order to find some vertex $v$ that satisfies the 
conditions of Lemma \ref{Lx}, we use a modular decomposition \cite{CH94} 
in our following Algorithm \ref{a2}. A basic observation is that for every connected module-composed
graph $G$ vertex $v$ is either a child or a grandchild of the root of $T_G$.

\begin{alg} \label{a2} ~
\hrule

\medskip
\noindent
Input: Graph $G$

\noindent
Output: Module-sequence $\varphi: V_G \to [|V_G|]$ or the answer NO

\begin{tabbing}
(11) \= d \= c \=  d \=  d \=  d \=  d\kill
(1)  \>  mod-com($G$)   \\
(2)  \> \> if ($G$ disconnected)\\
(3)  \> \> \> for every connected component $H$ of $G$: mod-com($H$); \\
(4)  \> \> else \{ \\
(5)  \> \> \> construct $T_G$ with root $r$; \\
(6)  \> \> \> if ($r$ is join node) \{ \\
(7)  \> \> \> \> if ($\exists$ child $v_l$ of $r$ which is a leaf in $T_G$) \{  \\
(8)  \> \> \> \>    \>for every such child $v_l$ of $r$  \{$\varphi(v_l)=i++$; $G=G-\{v_l\}$; \}\\
(9)  \> \> \> \> \>mod-com($G$);\}\\
(10)  \> \> \> \> else if ($\exists$ child $r_1$ of $r$ labeled by co-join and a child $v_l$ of $r_1$ which \\
(11)  \> \> \> \> is a leaf in $T_G$) \{  \\
(12)  \> \> \> \>    \>for every such vertex $v_l$  \{$\varphi(v_l)=i++$; $G=G-\{v_l\}$; \}\\
(13) \> \> \> \> \>mod-com($G$); \} \\ 
(14)  \> \> \> \} \\
(15)  \> \> \> else if ($r$ is prime node) \{ \\
(16)  \> \> \> \> if ($\exists$ child $v_1$ of $r$ which is a leaf in $T_G$ and corresponds
to a vertex \\
(17)  \> \> \> \> of degree 1 in quotient graph $G[r]$) \{  \\
(18)  \> \> \> \>    \>for every such child $v_1$ of $r$  \{$\varphi(v_1)=i++$; $G=G-\{v_1\}$; \}\\
(19)  \> \> \> \> \>mod-com($G$);\}\\
(20)  \> \> \> \> else if ($\exists$ child $r_1$ of $r$ labeled by co-join and corresponds
to a vertex  \\
(21)  \> \> \> \> of degree 1 in quotient graph $G[r]$ and a child $v_1$ of $r_1$ which is a \\
(22)  \> \> \> \> leaf in $T_G$) \{  \\
(23)  \> \> \> \> \>for every such vertex $v_1$  \{$\varphi(v_1)=i++$; $G=G-\{v_1\}$; \}\\
(24)  \> \> \> \> \>mod-com($G$); \} \\ 
(25)  \> \> \> \} \\
(26)  \> \> \> else\\
(27)  \> \> \> \> return NO;\\
(28)  \> \> \} \\
\end{tabbing}

\vspace{-0.5cm}
\hrule
\end{alg}

The construction of the modular decomposition $T_G$ in Line (5) of Algorithm \ref{a2} 
can be realized in time $O(|V_G|+|E_G|)$ by \cite{CH94,MS99}.

\begin{theorem}
Given a graph $G$, one can decide in time $O(|V_G|\cdot(|V_G| + |E_G|))$ whether
$G$ is module-composed, and in the case of a positive answer, constructs  a module-sequence.
\end{theorem}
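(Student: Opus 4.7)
The plan is to prove two things: correctness of Algorithm~\ref{a2} and its claimed running time. Correctness reduces, via Lemma~\ref{Lx} and Lemma~\ref{Ld}, to showing that for a connected graph $G$ the algorithm finds some vertex $v$ with $N(v)$ a module in $G-\{v\}$ whenever such a vertex exists, and that the particular $v$ it selects really has this property; timing will follow from a simple amortization over the modular decomposition cost.

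First I would verify the forward direction: whenever the algorithm appends a vertex $v$ to $\varphi$, the neighbourhood $N(v)$ is a module in $G-\{v\}$. There are four sub-cases dictated by the tree $T_G$ with root $r$. (a) $r$ is a join node and $v$ is a leaf child of $r$: then $v$ is adjacent to every other vertex of $G$, so $N(v)=V_G\setminus\{v\}$ is trivially a module. (b) $r$ is a join node and $v$ is a leaf child of some co-join child $r_1$ of $r$: since $V_{G(r_1)}$ is a strong module of $G$ and $r_1$ is co-join, $v$ has no neighbours inside $V_{G(r_1)}$, and since $r$ is join $v$ is adjacent to every vertex outside $V_{G(r_1)}$; hence $N(v)=V_G\setminus V_{G(r_1)}$, which is a module in $G-\{v\}$ because $V_{G(r_1)}\setminus\{v\}$ inherits the uniform external neighbourhood of $V_{G(r_1)}$. (c) and (d) are the analogous prime-root arguments, using that the degree-$1$ condition in the quotient $G[r]$ forces $N(v)$ to coincide with the vertex set of a single strong module of $G$, which remains a module after deletion of $v$.

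The hard part, and the main obstacle, is the converse: if none of the four conditional tests in Algorithm~\ref{a2} succeeds, then no vertex of $G$ qualifies for Lemma~\ref{Lx}, so the algorithm is justified in returning NO. The key structural claim to establish is that in any connected module-composed graph $G$, a vertex $v$ with $N(v)$ a module in $G-\{v\}$ must be a child or grandchild of the root $r$ of $T_G$, located exactly in one of the four positions tested by the algorithm. I would prove this by walking up from $v$ along $T_G$: let $M$ be the smallest strong module of size $\ge 2$ containing $v$, corresponding to a node $p$ of $T_G$. Since $M$ is a module, the external neighbourhood $N(v)\setminus M$ is fixed for all vertices of $M$; combined with the requirement that $N(v)$ itself be a module of $G-\{v\}$, a short case analysis on the label of $p$ and of its parent shows that either $p=r$ (so $v$ is a child of $r$) or the parent of $p$ is $r$ and $p$ is a co-join with $v$ a leaf child (so $v$ is a grandchild of $r$). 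The further restrictions—that the relevant vertex must be a leaf child in the join case and a degree-$1$ element of $G[r]$ in the prime case—are forced by the same kind of external-neighbourhood argument, using that a prime quotient admits only trivial modules.

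Finally, for the running time I would argue by an easy recursion count. Each call of \texttt{mod-com} constructs a fresh modular decomposition, which by \cite{CH94,MS99} costs $O(|V_G|+|E_G|)$, and then does at most linear extra work scanning the root's children and its grandchildren; before the next recursive call at least one vertex has been deleted from $G$, so the total number of calls is bounded by $|V_G|$. Summing yields the claimed $O(|V_G|\cdot(|V_G|+|E_G|))$ bound, and the concatenation of the partial assignments to $\varphi$ made at each call produces the desired module-sequence.
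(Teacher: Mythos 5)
Your proposal is correct and follows the paper's intended argument exactly: the paper itself offers no written proof beyond the algorithm, the remark that the modular decomposition in Line (5) is computable in $O(|V_G|+|E_G|)$ time, and the one-line ``basic observation'' that a valid vertex of a connected module-composed graph must be a child or grandchild of the root of $T_G$ --- which is precisely the completeness claim you correctly identify as the hard part. Your sketch of that claim does go through (the case analysis is cleanest if you first note that $N(v)$ being a module of $G-\{v\}$ makes it a module of $G$ itself, and then repeatedly invoke the fact that a strong module overlaps no other module to force $v$'s parent, or grandparent, to be the root with the stated labels), and your soundness and timing arguments supply exactly what the paper leaves implicit.
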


Since module-composed graphs are HHD-free, we conclude by the results shown in
\cite{JO88} the following theorem.

\begin{theorem}
For every module-composed graph which is given together with a module-sequence the size of a largest independent set, the size of a largest clique, 
the chromatic number and the minimum number of cliques covering the graph
can be computed in linear time.
\end{theorem}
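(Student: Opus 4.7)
The plan is to derive this theorem as a direct corollary of two ingredients that are already in place: the structural result, established earlier in the paper, that every module-composed graph is HHD-free (in fact HHDS-free), and the algorithmic result of Jamison and Olariu \cite{JO88} which shows that on HHD-free graphs the independence number, the clique number, the chromatic number, and the clique cover number can all be computed in linear time.

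First I would simply invoke the inclusion ``module-composed $\subseteq$ HHD-free'' proved in the previous section; no new structural argument is needed. Second I would appeal to \cite{JO88}: since the graph $G$ given as input is HHD-free, all four optimization problems admit linear-time algorithms on $G$, which settles the theorem as stated.

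The one point that requires care, and which is the main obstacle to a fully detailed proof, is to justify why the module-sequence appears in the hypothesis at all. The algorithms of \cite{JO88} are formulated for HHD-free graphs and do not themselves demand a module-sequence, but they typically assume either that the input is already known to be HHD-free or that an appropriate structural certificate (for instance, a suitable vertex ordering or a modular decomposition) is supplied, so that no separate recognition step is charged to the running time. I would therefore argue that the provided module-sequence plays exactly this role: it certifies membership in the HHD-free class for free, so that the linear-time bound of \cite{JO88} applies without having to incorporate the cost of the recognition algorithm from the previous theorem, whose running time is $O(|V_G|\cdot(|V_G|+|E_G|))$ and would otherwise dominate.

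Thus the proof reduces to one inclusion and one citation, and the only thing worth spelling out is the mild observation that the input assumption about the module-sequence guarantees that the linear-time regime of \cite{JO88} can be entered directly, giving the claimed complexity.
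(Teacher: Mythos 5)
Your argument is exactly the paper's: the theorem is presented as an immediate consequence of the inclusion of module-composed graphs in the HHD-free graphs together with the linear-time algorithms of \cite{JO88}, with no further detail supplied. Your gloss on why the module-sequence appears in the hypothesis is reasonable; the sharper observation is that a module-sequence is itself a semi-perfect elimination ordering (a vertex whose neighbourhood in the already-built graph is a module cannot be the midpoint of an induced $P_4$ there), which is precisely the ordering the greedy procedures of \cite{JO88} consume, so neither a LexBFS pass nor the $O(|V_G|\cdot(|V_G|+|E_G|))$ recognition step is needed.
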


\section{Independent module-composed graphs}

Next we want to characterize module-composed graphs for a restricted case.

A graph $G$ is {\em independent module-composed}, if and only if there exists a linear 
ordering $\varphi:V_G \to [|V_G|]$, such that for every $2\leq i\leq|V_G|$ 
the neighbourhood of vertex $\varphi^{-1}(i)$ in graph $G[\{\varphi^{-1}(1),\ldots,\varphi^{-1}(i-1)\}]$ 
forms a module which is an independent set.

It is easy to see that independent module-composed graphs do not contain
any of the graphs of Table \ref{gr} as induced subgraph.

\begin{lemma}
Independent module-composed graphs are HHDG-free\footnote{(house,hole,domino,gem)-free}.
\end{lemma}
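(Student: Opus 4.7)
Since every independent module-composed graph is, by definition, module-composed, such graphs inherit HHDS-freeness (in particular house-, hole-, and domino-freeness) from the earlier lemma. Consequently the only new obligation is to exclude the gem as an induced subgraph.

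The first step will be to establish the analog of Lemma \ref{Li} for the independent variant: every induced subgraph of an independent module-composed graph is again independent module-composed. The argument mirrors Lemma \ref{Li}: deleting a vertex $v$ from the ordering replaces each subsequent neighborhood module $M$ by $M\setminus\{v\}$. This set is still independent (as a subset) and is still a module, because, if $v\in M$, the independence of $M$ forces $v$ to have no neighbors inside $M$, so removing $v$ from the quotient class does not alter any outside neighborhood.

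Given closure under induced subgraphs, it suffices to check that the gem itself is not independent module-composed. Writing the gem as a $P_4$ on $a,b,c,d$ together with a fifth vertex $e$ adjacent to all of $a,b,c,d$, the analog of Lemma \ref{Lx} would demand some vertex $v$ whose neighborhood in the remaining four vertices is an independent set. I will rule out every candidate in turn: $N(e)=\{a,b,c,d\}$ contains the edge $ab$; $N(a)=\{b,e\}$ and $N(d)=\{c,e\}$ each contain an edge incident to $e$; and $N(b)=\{a,c,e\}$ as well as $N(c)=\{b,d,e\}$ each contain two edges incident to $e$. No choice of $v$ yields an independent neighborhood, so the gem is not independent module-composed, which completes the exclusion.

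I expect no real obstacle beyond the case analysis; the only delicate point is noticing that in the induced-subgraph closure the independence of $M$ is precisely what makes removing one of its members compatible with the module property, so that the stronger notion is indeed hereditary.
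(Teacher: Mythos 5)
Your argument is correct and follows essentially the same route the paper intends (its proof is only the remark that "it is easy to see"): heredity of the independent variant plus a last-vertex check on the forbidden graphs, with house/hole/domino freeness inherited from the general lemma and the gem excluded by verifying that no vertex has an independent neighbourhood. One minor remark: in your heredity step the independence of $M$ is not actually needed to conclude that $M\setminus\{v\}$ remains a module of the vertex-deleted graph --- modules always restrict under vertex deletion, exactly as in Lemma~\ref{Li} --- independence is only needed to keep the restricted set independent, which is immediate for subsets.
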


HHDG-free are also known as distance hereditary graphs \cite{HM90,BM86}.
Examples for distance hereditary graphs are co-graphs and trees.
For the case of bipartite graphs\footnote{A graph is bipartite if it is $C_{2n+1}$-free, for $n\ge 1$.}, the notion module-composed 
even is equivalent to the notion of distance hereditary. 

\begin{theorem}[\cite{AGKKW06}]\label{bipm}  Let $G$ a bipartite graph. The following conditions are equivalent.
\begin{enumerate}
\item $G$ is module-composed.

\item $G$ is domino and hole free.

\item $G$ is distance hereditary.

\item $G$ is   $(6,2)$-chordal\footnote{ A graph is $(k,l)$-chordal if each cycle
of length at least $k$ has at least $l$ chords.}.
\end{enumerate}
\end{theorem}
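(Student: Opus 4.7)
\begin{proofsk}
My plan is to close the loop $(1)\Rightarrow(2)$, $(2)\Leftrightarrow(3)\Leftrightarrow(4)$, and $(3)\Rightarrow(1)$. The first implication is immediate from the earlier lemma that module-composed graphs are HHDS-free, which in particular forbids holes and the domino. For $(2)\Leftrightarrow(3)$ I would invoke Howorka's characterization of distance hereditary graphs as HHDG-free: since the house, the gem, and every $k$-sun each contain a triangle, bipartiteness already forces house-, gem- and sun-freeness, so HHDG-free collapses to (domino, hole)-freeness. For $(3)\Leftrightarrow(4)$ I would cite the Bandelt--Mulder theorem that distance hereditary graphs are exactly those in which every cycle of length $\geq 5$ carries two crossing chords, which immediately implies $(6,2)$-chordality; conversely, any hole in a bipartite graph is a chordless even cycle of length $\geq 6$, and the domino contains a $6$-cycle whose only chord is the shared edge of its two squares, so either obstruction is excluded by $(6,2)$-chordality.

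The key new direction is $(3)\Rightarrow(1)$, which I would prove by induction on $|V_G|$. The base case is trivial, and disconnected graphs are handled component-wise by Lemma~\ref{Ld}. For a connected bipartite distance hereditary $G$ with $|V_G|\geq 2$, the Bandelt--Mulder one-vertex extension theorem yields either a pendant vertex or a pair of twins, and bipartiteness rules out true twins (they would be adjacent and simultaneously lie in the same color class). In the pendant case, I append the pendant $v$ to a module-sequence of $G-v$ obtained by induction: $N(v)$ is a singleton and hence a trivial module. In the twin case, with a false-twin pair $u_1,u_2$, induction applied to $G-u_2$ (an induced subgraph, so still bipartite and distance hereditary) produces a module-sequence $\sigma'$ in which $u_1$ appears at some position~$j$; I build a module-sequence $\sigma$ for $G$ by inserting $u_2$ immediately after $u_1$ with $N(u_2)=N(u_1)$.

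The main obstacle is checking that this insertion really yields a module-sequence. At the insertion step, the neighborhood of $u_2$ in the currently built graph equals $N(u_1)\cap\{\sigma(1),\dots,\sigma(j-1)\}$, which was a module of $G[\{\sigma(1),\dots,\sigma(j-1)\}]$ by validity of $\sigma'$ at step $j$; the only additional vertex present is $u_1$, which is adjacent to every vertex of that set and therefore only contributes a common external neighbor, preserving the module property. For every vertex $v_k$ appearing after $u_1$ in $\sigma'$, the corresponding neighborhood in $\sigma$ is obtained from the one in $\sigma'$ by the substitution $u_1\leftrightarrow\{u_1,u_2\}$; because false twins have identical adjacencies, this substitution neither splits the members of a previously established module nor creates new external distinctions between them, so the module property transfers step by step. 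With this insertion lemma verified, the induction goes through and the four conditions are equivalent.
\end{proofsk}
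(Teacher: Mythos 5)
The paper does not prove this theorem at all: it is imported verbatim from \cite{AGKKW06}, so there is no in-paper argument to compare against. Your proof is a correct, self-contained derivation, and its structure is the natural one: $(1)\Rightarrow(2)$ from the HHDS-freeness lemma, the collapse of the HHDG obstruction set to $\{$hole, domino$\}$ under bipartiteness for $(2)\Leftrightarrow(3)$, the Bandelt--Mulder cycle condition for $(3)\Leftrightarrow(4)$ (noting that a bipartite hole has length at least $6$ and that the domino's outer $6$-cycle has a single chord), and the pendant-vertex/twin one-vertex-extension theorem for the substantive direction $(3)\Rightarrow(1)$. The twin-insertion argument is the right key lemma, and your two verification points --- that $u_1$ is a common external neighbour of $M=N(u_2)\cap P$ and hence does not destroy the module property at the insertion step, and that replacing $u_1$ by the pair $\{u_1,u_2\}$ in every later prefix preserves all module conditions because false twins are indistinguishable from outside --- are exactly what is needed and both check out. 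Two small points of precision: your parenthetical reason for excluding true twins is garbled (true twins are adjacent and therefore lie in \emph{different} colour classes; the actual obstruction is that a common neighbour would create a triangle), and it silently excludes the $K_2$ component, whose two vertices are true twins; that case is harmless since $K_2$ also has a pendant vertex, but it should be mentioned. It would also be worth stating explicitly that an induced subgraph of a bipartite distance-hereditary graph is again bipartite and distance-hereditary, which is what licenses the inductive calls on $G-v$ and $G-u_2$.
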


For general graphs Theorem \ref{bipm} does not hold true, since 
there are module-composed graphs which are not distance hereditary, e.g. the gem and there
are  distance hereditary graph which are not module-composed, e.g. the
co-($K_{3,3}-e$).


\bigskip
The problem to decide whether a given graph is bipartite distance hereditary
and to construct a corresponding pruning sequence can be done in linear time
by the well known characterization for bipartite graphs as 2-colorable graphs and
the linear time recognition algorithms for distance hereditary graphs shown in \cite{HM90,BM86}.
By Theorem \ref{bipm}, this immediately implies a linear time algorithms for recognizing
independent module-composed graphs. A corresponding module-sequence can be constructed 
in linear time from a pruning sequence as shown in \cite{AGKKW06}.
Since both known linear time recognition algorithms for distance hereditary graphs shown 
in \cite{HM90,BM86} are based on the fact that the neighbourhood of every vertex in 
a  distance hereditary graph is a co-graph and additional conditions, both algorithms 
are not simple.

In \cite{JO88} it is shown that for HHD-free graphs every Lex-BFS (Lexicographic Breadth First Search) 
ordering is a semi perfect elimination ordering, i.e. every vertex $\varphi^{-1}(i)$ is no midpoint  of an
induced $P_4$ in graph 
$G[\{\varphi^{-1}(1),\ldots,\varphi^{-1}(i-1)\}]$. In the case of bipartite graphs 
this ordering obviously is even an independent module-sequence.

\begin{theorem}
Given an independent module-composed graph $G$, every  Lex-BFS ordering  
constructs in  time $O(|V_G|+|E_G|)$ an independent module-sequence for $G$.
\end{theorem}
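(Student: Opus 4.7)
The plan is to reduce to the Lex-BFS result of \cite{JO88} cited in the preceding paragraph, and then to exploit the bipartite structure of independent module-composed graphs to upgrade a semi perfect elimination ordering into an independent module-sequence.

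First I would verify that every independent module-composed graph $G$ is bipartite. It is triangle-free: given a triangle $v_a,v_b,v_c$ with $\varphi(v_a)<\varphi(v_b)<\varphi(v_c)$, the set $N(v_c)\cap\{\varphi^{-1}(1),\ldots,\varphi^{-1}(c-1)\}$ would contain the adjacent pair $v_a,v_b$, contradicting the independence requirement at step $c$. Together with the lemma above that independent module-composed graphs are HHDG-free (in particular distance hereditary and hence hole-free), this leaves only $C_4$ as a possible induced cycle, so $G$ has no induced odd cycle and is bipartite. Consequently $G$ is HHD-free, so by \cite{JO88} every Lex-BFS ordering $\varphi$ is a semi perfect elimination ordering: for each $i$, the vertex $v_i:=\varphi^{-1}(i)$ is no midpoint of an induced $P_4$ in $G_i:=G[\{\varphi^{-1}(1),\ldots,\varphi^{-1}(i)\}]$.

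The next step is to convert this $P_4$-avoidance into the required module property. Set $N_i:=N(v_i)\cap\{\varphi^{-1}(1),\ldots,\varphi^{-1}(i-1)\}$; I would show that $N_i$ is a module of $G_{i-1}:=G[\{\varphi^{-1}(1),\ldots,\varphi^{-1}(i-1)\}]$. If not, there exist $u_1,u_2\in N_i$ and a vertex $w$ of $G_{i-1}$ outside $N_i$ with $w\sim u_1$ and $w\not\sim u_2$. Triangle-freeness forces $u_1\not\sim u_2$ (both lie in $N(v_i)$), while by construction $w\not\sim v_i$ and $w\not\sim u_2$. Hence $w-u_1-v_i-u_2$ is an induced $P_4$ of $G_i$ with $v_i$ as one of its midpoints, contradicting the semi perfect elimination property. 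Since bipartiteness also guarantees that $N_i$ is independent, $\varphi$ is an independent module-sequence, and the stated $O(|V_G|+|E_G|)$ bound follows from the standard linear-time Lex-BFS implementation.

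The only mildly delicate point I anticipate is the triangle-free-to-bipartite step: the HHDG-free lemma rules out long holes, but one still has to invoke the defining ordering explicitly to exclude triangles before concluding that the graph is bipartite and thus HHD-free in the sense required by \cite{JO88}. Everything else is bookkeeping around the $P_4$ produced from a would-be module violation.
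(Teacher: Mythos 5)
Your proposal is correct and follows essentially the same route as the paper, which simply invokes the HHDG-free lemma, the Jamison--Olariu result that Lex-BFS orderings of HHD-free graphs are semi perfect elimination orderings, and then asserts that for bipartite graphs such an ordering is ``obviously'' an independent module-sequence. Your write-up supplies exactly the details the paper leaves implicit (triangle-freeness plus hole-freeness giving bipartiteness, and the $w$--$u_1$--$v_i$--$u_2$ induced $P_4$ witnessing any module violation), so it is a faithful, more complete version of the intended argument.
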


To decide whether a given graph is bipartite distance hereditary can be
done by Corollary 5 shown in \cite{BM86} using the fundamental search strategy 
of BFS (Breadth First Search) which produces a classification of the vertices into levels, 
with respect to a start vertex $u$. Level $i$ is the set of vertices
with distance $i$ to vertex $u$ and is denoted by $N_i(u)$.

\begin{theorem}[Corollary 5 of \cite{BM86}]
Let $G$ be a connected graph and let $u$ be a vertex of $G$. Then $G$
is bipartite distance hereditary if and only if all levels $N_k(u)$
are edgeless, and for every vertices $v$,$w$ in $N_k(u)$ and neighbours
$x$ and $y$ of $v$ in $N_{k-1}(u)$, we have $N(x)\cap N_{k-2}(u)=N(y)\cap N_{k-2}(u)$,
and further $N(v)\cap N_{k-1}(u)$ and $N(w)\cap N_{k-1}(u)$ are either disjoint
or one is contained in the other.
\end{theorem}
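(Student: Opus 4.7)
\begin{proofsk}
The plan is to establish both implications via the characterization from Theorem~\ref{bipm}: bipartite distance hereditary graphs are exactly the bipartite (hole, domino)-free graphs.

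For the forward direction, assume $G$ is bipartite and distance hereditary. Bipartiteness immediately forces every level $N_k(u)$ to be edgeless, since an edge $\{v,w\}$ inside $N_k(u)$, combined with any two shortest $u$--$v$ and $u$--$w$ paths in the BFS tree, produces a closed odd walk and hence an odd cycle. For the remaining two conditions, I would exploit the well-known pendant/twin decomposition of distance hereditary graphs. If some $v\in N_k(u)$ has neighbours $x,y\in N_{k-1}(u)$ and some $z\in N_{k-2}(u)$ is adjacent to $x$ but not to $y$, then pick any $z'\in N(y)\cap N_{k-2}(u)$; the walk $z$--$x$--$v$--$y$--$z'$ together with shortest paths from $z$ and $z'$ back to $u$ can be pruned to an induced cycle of length at least $6$, or in the degenerate case $z=z'$ to an induced domino, contradicting Theorem~\ref{bipm}. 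The disjoint/nested condition on $N(v)\cap N_{k-1}(u)$ and $N(w)\cap N_{k-1}(u)$ is handled analogously: neighbours witnessing a violation close up, via shortest paths to $u$, to a hole or a domino.

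For the backward direction, suppose the three level conditions hold. Edgeless levels combined with the BFS structure yield bipartiteness at once: every edge lies between two consecutive levels. It remains to exclude induced holes and induced dominoes. For a hypothetical induced cycle $C$ of length $2t\geq 6$, let $p$ be the vertex of $C$ closest to $u$, say $p\in N_j(u)$, and let its two $C$-neighbours $v,w$ lie in $N_{j+1}(u)$. Walking along $C$ away from $p$, the two arms climb through successive levels, and at each level the hypothesis forces their level-neighbourhoods to be either disjoint or one contained in the other. A short induction on the distance along $C$ from $p$ shows that two disjoint arms cannot close up again, while two nested arms force, at the point where $C$ turns back, a chord that destroys the induced cycle. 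The domino case is a finite verification in the same spirit.

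The main obstacle is the second paragraph: translating purely local BFS conditions into a global non-existence of arbitrarily long induced cycles. The cleanest route is to track, level by level, the laminar structure of arm-neighbourhoods and to argue that this laminarity is incompatible with an induced cycle closing up. Everything else, in particular the bipartiteness deduction and the small-case verifications for the domino, is routine once the hole case is handled.
\end{proofsk}
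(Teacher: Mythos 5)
First, note that the paper offers no proof of this statement: it is imported verbatim as Corollary~5 of \cite{BM86}, so the only argument to compare yours against is Bandelt and Mulder's original one, which goes through their metric level-by-level characterization of distance hereditary graphs rather than through forbidden subgraphs. Your plan --- prove both directions by reducing to the equivalence ``bipartite distance hereditary $=$ bipartite, hole- and domino-free'' already recorded in Theorem~\ref{bipm} --- is a legitimate alternative route. The forward direction is essentially sound, up to two small repairs: the ``degenerate case $z=z'$'' cannot occur, since $z\notin N(y)$ while $z'\in N(y)$; and the pruning of the closed walk to an induced hole or domino should be made explicit using the fact that in a bipartite graph a chord of an even cycle must join vertices at odd cycle-distance, so a $C_6$ admits only antipodal chords, of which your configuration forbids all but one.

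The backward direction, however, has a genuine gap, and it is exactly the one you flag. The picture ``the two arms climb through successive levels'' is false: once all levels are edgeless, every edge changes the level by exactly one, so the level function along an induced cycle is a $\pm 1$ closed walk that may oscillate --- for instance, a $C_6$ can sit entirely inside two consecutive levels $N_j(u)$ and $N_{j+1}(u)$, three vertices in each. (That particular configuration is excluded by the disjoint-or-nested condition: two of the three top vertices share one bottom vertex while each keeps a private one, and inducedness of the cycle blocks either containment; but this is not an instance of your ``climbing arms'' scheme.) In general the cycle has several peaks and valleys, the ``arms'' are not well defined, and neither ``disjoint arms cannot close up'' nor ``nested arms force a chord'' is stated precisely, let alone proved; moreover the laminarity you want to track concerns the full neighbourhoods $N(v)\cap N_{k-1}(u)$, and neighbours lying off the cycle can destroy both disjointness and nesting of the cycle-arms without violating the hypothesis. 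The domino is likewise not a one-line check, since its six vertices can be distributed over the levels in several ways, each requiring a different combination of the two neighbourhood conditions. To close the argument you would need an induction on cycle length that, at each peak $c\in N_m(u)$ with cycle-neighbours $a,b\in N_{m-1}(u)$, uses $N(a)\cap N_{m-2}(u)=N(b)\cap N_{m-2}(u)$ to reroute the cycle through a common lower neighbour while controlling the chords this creates --- at which point you are essentially reconstructing the proof in \cite{BM86}.
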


A BFS starting at a vertex $u$ can compute the level sets $N_k(u)$ 
in time $O(|V_G|+|E_G|)$ and using these levels,
the conditions of Corollary 5 of \cite{BM86} can be verified in the same time.

A BFS numbering $\varphi$ of the vertices with respect to some
vertex $u$ can be used to obtain a
module-sequence $\varphi_1$ as follows. We start with $\varphi_1(v)=\varphi(v)$, $\forall v\in V_G$.
For the first $|N_0(u)|+|N_1(u)|$ vertices we obviously 
can choose $\varphi_1(v)=\varphi(v)$. For the vertices of $w\in N_k(u)$, $k\ge 2$,
we know that their neighbours in set $N_{k-1}(u)$ are modules which can be ordered
by a series of inclusions $N^1\subseteq N^2 \subseteq \ldots \subseteq N^j$. We rearrange the order of 
the vertices in $N_k(u)$ with respect to $\varphi_1$ such that for every such series of inclusions 
$\varphi_1(w_1)<\varphi_1(w_2)$ if and only if $N_{k-1}(u)\cap N(w_1)\supseteq N_{k-1}(u)\cap N(w_2)$.
This obviously leads a module-sequence for graph $G$ if $G$ is bipartite distance hereditary.

\begin{theorem}
Given a graph $G$, one can decide using  BFS in time $O(|V_G|+|E_G|)$ whether
$G$ is independent module-composed, and in the case of a positive answer, 
construct a module-sequence.
\end{theorem}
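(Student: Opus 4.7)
The plan is to exploit the equivalence, implicit in Theorem~\ref{bipm} and the preceding discussion, between independent module-composed graphs and bipartite distance hereditary graphs. On the one hand, every independent module-composed graph is triangle-free (in any triangle the last-inserted vertex would have its two triangle neighbours adjacent in its own allegedly independent back-neighbourhood) and HHDG-free, hence has no odd cycles at all and must be bipartite; Theorem~\ref{bipm} then identifies it with a bipartite distance hereditary graph. Conversely, in any bipartite module-composed graph the back-neighbourhood of a newly inserted vertex lies entirely on one side of the bipartition and is therefore automatically independent, so any module-sequence is an independent one.

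Given this reduction, I would first run a standard BFS 2-coloring test on $G$ in time $O(|V_G|+|E_G|)$; if $G$ is not bipartite, return NO. Otherwise I would process each connected component separately: pick a root $u$, compute the BFS levels $N_k(u)$, and check the three conditions of Corollary~5 of \cite{BM86}, namely edgelessness of every $N_k(u)$, equality of back-neighbourhoods in $N_{k-2}(u)$ for any two common neighbours in $N_{k-1}(u)$ of a vertex of $N_k(u)$, and the disjoint-or-nested property for the back-neighbourhoods $N(v)\cap N_{k-1}(u)$ with $v\in N_k(u)$. Bucketing vertices of each level by their back-neighbourhood and walking along the BFS edges lets each of these tests fit in the target bound.

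If the tests succeed, I would turn the BFS order $\varphi$ into a module-sequence $\varphi_1$ as sketched in the paragraph preceding the theorem: leave $N_0(u)\cup N_1(u)$ in its BFS order and, for each $k\ge 2$, reorder $N_k(u)$ so that $\varphi_1(w_1)<\varphi_1(w_2)$ whenever $N(w_1)\cap N_{k-1}(u)\supseteq N(w_2)\cap N_{k-1}(u)$. The disjoint-or-nested condition makes these back-neighbourhoods a laminar family, so this partial order extends to a linear one; the extension can be produced by building the laminar tree at level $k$ in time proportional to $\sum_{w\in N_k(u)}|N(w)\cap N_{k-1}(u)|$, which summed over all $k$ is $O(|E_G|)$.

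The main obstacle I expect is verifying that this reordering really yields a module-sequence. When $w\in N_k(u)$ is inserted, its neighbourhood in the already-inserted subgraph is $M=N(w)\cap N_{k-1}(u)$, which is independent since $N_{k-1}(u)$ is edgeless. To show that $M$ is a module in the already-inserted subgraph I would argue that, because BFS edges only join consecutive levels, the only potentially distinguishing vertices lie in $N_{k-2}(u)$ or in $N_k(u)$: for $z\in N_{k-2}(u)$ the uniformity condition of Corollary~5 forces all elements of $M$ to share the same back-neighbourhood in $N_{k-2}(u)$, so $z$ is adjacent either to all of $M$ or to none; for any $z\in N_k(u)$ inserted before $w$ the laminar ordering gives $N(z)\cap N_{k-1}(u)$ either containing $M$ or disjoint from it, and since $N_k(u)$ is edgeless again $z$ treats $M$ uniformly. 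Once this modularity argument is in place, the total running time is dominated by one BFS and the per-level laminar sorts, giving the claimed $O(|V_G|+|E_G|)$ bound.
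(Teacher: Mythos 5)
Your proposal is correct and follows essentially the same route as the paper: a BFS level decomposition, verification of the conditions of Corollary~5 of \cite{BM86}, and a reordering of each level $N_k(u)$ by reverse inclusion of the back-neighbourhoods $N(w)\cap N_{k-1}(u)$ to produce the module-sequence. You in fact supply details the paper leaves implicit, namely the explicit argument that independent module-composed graphs are bipartite (and hence coincide with bipartite distance hereditary graphs via Theorem~\ref{bipm}) and the case analysis showing that each back-neighbourhood is indeed a module of the already-inserted subgraph.
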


On bipartite distance hereditary graphs, and so on independent module-composed graphs, the 
path-partition problem \cite{YC98},
hamiltonian circuit and path problem \cite{MN93},
and the computation of shapley value ratings  \cite{AGKKW06}
can be solved in polynomial time.

It is well known that distance hereditary graphs and
thus independent module-composed graphs have clique-width at most 3 \cite{GR00}. This implies that
all graph properties which are expressible in monadic second order logic with
quantifications over vertices and vertex sets ($\MSOA$-logic) are decidable in
linear time on independent module-composed graphs \cite{CMR00}. Some of these problems are
partition into $k$ independent sets or cliques, $k$-dominating set, $k$-achromatic
number, for every fixed integer $k$.

Furthermore, there are a lot of NP-complete graph problems which are not expressible in
$\MSOA$-logic like chromatic number, partition problems,  vertex disjoint paths, and
bounded degree subgraph problems but which can also be solved in polynomial time on clique-width
bounded graphs and thus on bipartite distance hereditary graphs \cite{EGW01a,GW06}. 

Note that general module-composed graphs are of unbounded clique-width. 
For example every graph which can be constructed from a single
vertex by a sequence of one vertex extentions by a domination vertex\footnote{A vertex $v\in V_G$ is a {\em dominating vertex} of $G$, if it is adjacent to all other vertices in $G$.} or a
pendant vertex\footnote{A vertex $v\in V_G$ of degree one is called a {\em pendant vertex} of $G$.} is 
obviously module-composed. But the set of all such defined graphs
have unbounded clique-width  \cite{Rao07}.

\section{Graph class inclusions}

In Table \ref{grcl} we summarize the relation of module-composed graphs
and related graph classes. For the definition and relations of special graph classes
we refer to the survey of Brandst\"adt  et al. \cite{BLS99}.

\begin{table}
\begin{center}
\epsfig{figure=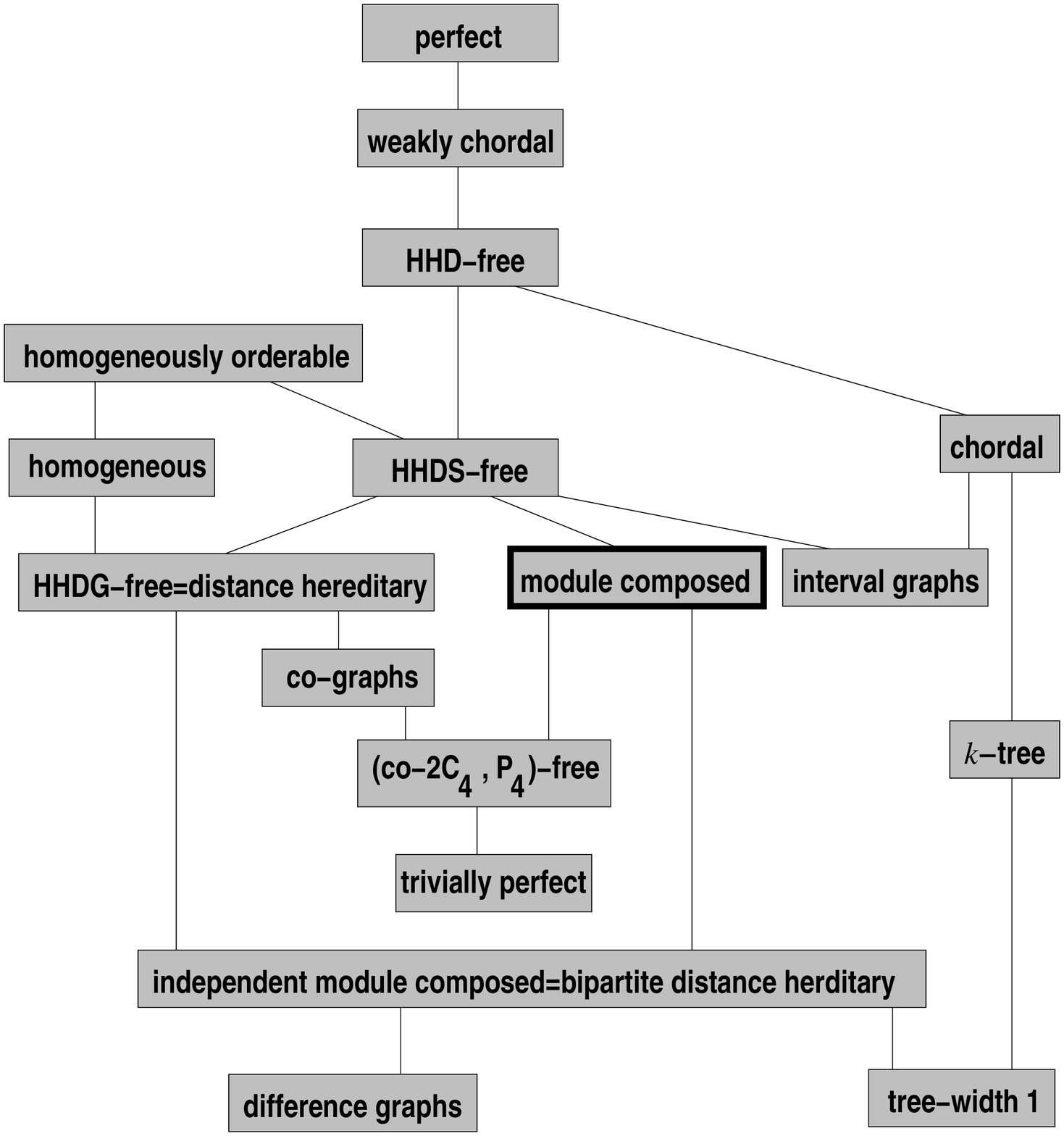,width=11.0cm}
\caption{Inclusion of special graph classes}\label{grcl}
\end{center}
\end{table}


\bibliographystyle{alpha}
\bibliography{/home/gurski/bib.bib}

\newcommand{\etalchar}[1]{$^{#1}$}
\begin{thebibliography}{AKKW06}

\bibitem[AGK{\etalchar{+}}06]{AGKKW06}
M.~Abraham, F.~Gurski, A.~Krumnack, R.~K\"otter, and E.~Wanke.
\newblock A connectivity rating for vertices in networks.
\newblock submitted, 2006.

\bibitem[AKKW06]{AKKW06}
M.~Abraham, R.~K\"otter, A.~Krumnack, and E.~Wanke.
\newblock A connectivity rating for vertices in networks.
\newblock In {\em Proceedings of the 4th IFIP International Conference on
  Theoretical Computer Science-TCS}, pages 283--298. Springer, 2006.

\bibitem[BDN97]{BDN97}
A.~Brandst\"adt, F.~F. Dragan, and F.~Nicolai.
\newblock Homogeneously orderable graphs.
\newblock {\em Theoretical Computer Science}, 172:209--232, 1997.

\bibitem[BLS99]{BLS99}
A.~Brandst\"adt, V.B. Le, and J.P. Spinrad.
\newblock {\em Graph Classes: A Survey}.
\newblock SIAM Monographs on Discrete Mathematics and Applications. SIAM,
  Philadelphia, 1999.

\bibitem[BM86]{BM86}
H.-J. Bandelt and H.M. Mulder.
\newblock Distance-hereditary graphs.
\newblock {\em Journal of Combinatorial Theory, Series B}, 41:182--208, 1986.

\bibitem[CH94]{CH94}
A.~Cournier and M.~Habib.
\newblock A new linear time algorithm for modular decomposition.
\newblock In {\em Procedings of CAAP}, volume 787 of {\em LNCS}, pages 68--84.
  Springer, 1994.

\bibitem[CMR00]{CMR00}
B.~Courcelle, J.A. Makowsky, and U.~Rotics.
\newblock Linear time solvable optimization problems on graphs of bounded
  clique-width.
\newblock {\em Theory of Computing Systems}, 33(2):125--150, 2000.

\bibitem[EGW01]{EGW01a}
W.~Espelage, F.~Gurski, and E.~Wanke.
\newblock How to solve {NP}-hard graph problems on clique-width bounded graphs
  in polynomial time.
\newblock In {\em Proceedings of Graph-Theoretical Concepts in Computer
  Science}, volume 2204 of {\em LNCS}, pages 117--128. Springer, 2001.

\bibitem[Far83]{Far83}
M.~Farber.
\newblock Characterizations of strongly chordal graphs.
\newblock {\em Discrete Mathematics}, 43:173--189, 1983.

\bibitem[Gol78]{Gol78}
M.C. Golumbic.
\newblock Trivially perfect graphs.
\newblock {\em Discrete Mathematics}, 24:105--107, 1978.

\bibitem[GR00]{GR00}
M.C. Golumbic and U.~Rotics.
\newblock On the clique-width of some perfect graph classes.
\newblock {\em International Journal of Foundations of Computer Science},
  11(3):423--443, 2000.

\bibitem[GW06]{GW06}
F.~Gurski and E.~Wanke.
\newblock Vertex disjoint paths on clique-width bounded graphs.
\newblock {\em Theoretical Computer Science}, 359(1-3):188--199, 2006.

\bibitem[HM90]{HM90}
P.L. Hammer and F.~Maffray.
\newblock Completely separable graphs.
\newblock {\em Discrete Applied Mathematics}, 27:85--99, 1990.

\bibitem[JO88]{JO88}
B.~Jamison and S.~Olariu.
\newblock On the semi-perfect elimination.
\newblock {\em Advances in applied mathematics}, 9:364--376, 1988.

\bibitem[MN93]{MN93}
H.~M\"uller and F.~Nicolai.
\newblock Polynomial time algorithms for hamiltonian problems on bipartite
  distance-hereditary graphs.
\newblock {\em Information Processing Letters}, 46(5):225--230, 1993.

\bibitem[MR84]{MR84}
R.H. M{\"o}hring and F.J. Radermacher.
\newblock Substitution decomposition for discrete structures and connections
  with combinatorial optimization.
\newblock {\em Annals of Discrete Mathematics}, 19:257--365, 1984.

\bibitem[MS99]{MS99}
R.M. McConnell and J.~Spinrad.
\newblock Modular decomposition and transitive orientation.
\newblock {\em Discrete Mathematics}, 201(1-3):189--241, 1999.

\bibitem[Rao07]{Rao07}
M.~Rao.
\newblock Clique-width of graphs defined by one-vertex extensions.
\newblock Manuscript, 2007.

\bibitem[YC98]{YC98}
H.-G. Yeh and G.J. Chang.
\newblock The path-partition problem in bipartite distance-hereditary graphs.
\newblock {\em Taiwanese Journal of Mathematics}, 2(3):353--360, 1998.

\end{thebibliography}

\end{document}